\theoremstyle{plain}
\newtheorem{theorem}{Theorem}
\newtheorem{proposition}{Proposition}
\theoremstyle{definition}
\theoremstyle{remark}
\newtheorem{remark}{Remark}
\newtheorem*{remark*}{Remark}
\newtheorem{example}{Example}
\numberwithin{equation}{section}
\begin{document}

\title{Topology of the cone of positive maps on qubit systems}

\author[1]{Marek Miller\thanks{marek.miller@ift.uni.wroc.pl}}
\author[1]{Robert Olkiewicz\thanks{robert.olkiewicz@ift.uni.wroc.pl}}
\affil[1]{Instytut Fizyki Teoretycznej, Uniwersytet Wrocławski, Poland}

\date{}

\maketitle

\abstract{
An alternative, geometrical proof of a known theorem concerning
the decomposition of positive maps of the matrix algebra $M_{2}(\mathbb{C})$
has been presented.
The premise of the proof is the identification of positive maps
with operators preserving the Lorentz cone in four dimensions,
and it allows to decompose the positive maps with respect to those 
preserving the boundary of the cone.
In addition, useful conditions implying complete positivity
of a map of $M_{2}(\mathbb{C})$ have been given,
together with a sufficient condition for complete positivity of
an extremal Schwarz map of $M_{n}(\mathbb{C})$.
Lastly, following the same geometrical approach, a description
in topological terms of maps that are simultaneously completely positive
and completely copositive has been presented.
}

\section*{Introduction}
\label{sec:Introduction}

\paragraph{}
The theory of positive maps constitutes an increasingly popular subject of research,
and its mathematical appeal matches only its ability to present constructive
applications to the theory of quantum information and quantum entanglement
\cite{stormer2012positive, chruscinski2014entanglement}.
The classification of positive maps on $M_{2}(\mathbb{C})$,
the algebra of square complex matrices of size 2,
in physical terms representing a qubit,
was a first turning point of the theory.
The theorem proven by E.\,Størmer 
\cite{stormer1963positive}
and S.\,L.\,Woronowicz
\cite{woronowicz1976positive}
says that every positive map
$S: M_{2}(\mathbb{C}) \rightarrow M_{n}(\mathbb{C})$,
where $n \leq 3$, can be decomposed into the sum:
$S = \Lambda_{1} + \Lambda_{2} \circ t$,
where both maps $\Lambda_{1}, \Lambda_{2}: M_{2}(\mathbb{C}) \rightarrow M_{n}(\mathbb{C})$
are completely positive, possibly zero,
and $t$ stands for the transposition map of $M_{2}(\mathbb{C})$.
As the structure of completely positive maps had been entirely understood
thanks to the work of 
W.\,F.\,Stinespring
\cite{stinespring1955positive}
and M.-D.\,Choi
\cite{choi1975completely},
the theorem offered the full description of positive maps in this low-dimensional case.
Hence, e.g. the structure of entanglement witnesses of two-qubit systems can be
expressed concisely with the celebrated PPT criterion by
A.\,Peres 
\cite{peres1996separability}
and R., P., M. Horodecki \cite{horodecki1996separability}.

In this paper, we present an alternative proof of the cited theorem.
The premise of the proof is the identification of positive maps
of $M_{2}(\mathbb{C})$ with operators preserving the four-dimensional Lorentz cone.
We show that the decomposition of positive maps is rooted in the fact that
every such operator is a convex combination of those additionally preserving
the cone's topological boundary 
\cite{loewy1975positive}.
Thus, the proof of the theorem,
contrary to the previous attempts based mainly
on involved algebraic manipulation of matrices,
provides a geometrical insight into the nature of the decomposition discussed above.
It must be mentioned that a similar method of looking at positive maps and their
application to the theory of quantum composite systems has been employed before
(see e.g. \cite{leinaas2006geometrical}).
In addition, we approach the conjecture stated by Robertson
\cite{robertson1983schwarz},
who asks whether a positive extremal map that fulfils the Kadison-Schwarz
inequality \eqref{eq:SchwarzInequality} is completely positive.
With an additional assumption, we give the affirmative answer to this question
for positive maps on matrix algebras.
Last but not least, using that geometrical approach,
we have been able to provide useful conditions for a positive map to be completely positive,
as well as to describe the set of bistochastic maps that are simultaneously
completely positive and completely copositive in terms of
a norm-induced topology on the unit ball of the identified set of operators.

\section{Preliminaries}
\label{sec:Preliminaries}

\paragraph{}
Let
$M_{n} = M_{n}(\mathbb{C})$ be the algebra of complex matrices
of size $n$;
the algebra of real matrices will always be denoted explicitly by
$M_{n}(\mathbb{R})$.
We will use letters $A, B, X$, etc. to specify a matrix of $M_{n}$.
The symbol $\mathbf{1}_{n}$, or simply $\mathbf{1}$,
means the identity matrix of $M_{n}$.
The norm of $A$, denoted by $||A||$, is understood as the operator
norm of $A$ as a linear map acting on $\mathbb{C}^{n}$.
For $A \in M_{n}$,
we denote its trace by $\text{Tr} \, A$;
and by $A^{t}$ and $A^{*} = \overline{A}^{\,t}$
its transpose and conjugate transpose, respectively.
The commutator of two matrices $A, B$ is defined as
$[A, \, B] = AB - BA$.
We say that a matrix $A$ is positive-semidefinite,
or simply positive,
if $A = A^{*}$ and $A$ has a non-negative spectrum.

A linear map $S\!: M_{n} \rightarrow M_{n}$ is said to be positive,
indicated as $S \geq 0$,
if for any $A \in M_{n}$ such that $A \geq 0$,
we have $S(A) \geq 0$.
The identity map of $M_{n}$ is labelled $I_{n}$,
or simply $I$.
For a positive map $S$,
its operator norm is given by $||S|| = S(\mathbf{1})$.
Any positive map is Hermitian,
i.e.  $S(X^{*}) = S(X)^{*}$, for all $X \in M_{n}$.
We say that a positive map fulfils the Kadison-Schwarz inequality, if
\begin{equation}
\label{eq:SchwarzInequality}
    S(X^{*}) \, S(X) \: \leq \: S(X^{*} X),
\end{equation}
for any matrix $X \in M_{n}$.
A positive map $S$ that preserves the identity,
i.e. $S(\mathbf{1}) = \mathbf{1}$,
fulfils the Kadison-Schwarz inequality, but only for normal matrices
$X$, that is the ones such that $X X^{*} = X^{*} X$
\cite{choi1974schwarz}.
Folowing \cite{robertson1983schwarz}, we call a positive map that
preserves the identity and fullfils the Kadison-Schwarz inequality --
a Schwarz map.
A map $S$ is called bistochastic,
if it is positive
and preserves both the identity and trace,
i.e.
$S(\mathbf{1}) = \mathbf{1}$
and
$\text{Tr} \, S(X) = \text{Tr} \, X$
for any $X \in M_{n}$.
For $k \in \mathbb{N}$, a positive map $S$,
such that the map
$I_{k} \otimes S :  M_{k} \! \otimes \! M_{n}
 \rightarrow  M_{k} \! \otimes \! M_{n}$
is positive,
is called $k$-positive.
If a map is a $k$-positive map for every $k$,
it is called completely positive.
Similarly, a map is $k$-\emph{co}positive,
or completely copositive,
if $I_{k} \otimes (S \circ t)$ is positive for
some $k$, or for every $k$, respectively,
where $t \! : A \mapsto A^{t}$,
$A \in M_{n}$, is the transposition map.
A positive map $S$ is called decomposable,
if it can be written in the form
$S(A) = \Lambda_{1}(A) + \Lambda_{2} (A^{t})$,
where both maps $\Lambda_{1}, \Lambda_{2}$ are completely positive,
possibly zero
(see \cite{choi1975completely} for more details on completely positive maps).
By an extremal positive map on $M_{n}$,
we exclusively mean an extremal element of the cone of all positive maps on $M_{n}$,
as defined below.

From now on, we focus on the algebra $M_{2}$.
We shall use the Greek indices $\mu, \nu$ to denote the numbers
$\mu,\nu = 0,1,2,3$;
whereas the Latin indices stand for
$i,j = 1,2,3$.
Let $(\sigma_{\mu})$, $\mu = 0, 1, 2, 3$;
be a basis of $M_{2}$
such that $\sigma_{0} = \tfrac{1}{\sqrt{2}}  \mathbf{1}$ and
$\sigma_{i}$, $i=1,2,3$,
are normalised Pauli matrices
 \begin{equation}
  \sigma_{1} = \frac{1}{\sqrt{2}} \begin{pmatrix}
            0 & 1 \\ 1 & 0
               \end{pmatrix},
 \quad
 \sigma_{2} = \frac{1}{\sqrt{2}}  \begin{pmatrix}
            0 & -i \\ i & 0
               \end{pmatrix},
 \quad
 \sigma_{3} = \frac{1}{\sqrt{2}}  \begin{pmatrix}
            1 & 0 \\ 0 & -1
               \end{pmatrix}.
 \end{equation}
A Hermitian element $\sigma(x)$ of $M_{2}$,
$\sigma(x) = \sum_{\mu =0}^{3} x_{\mu} \sigma_{\mu}$,
for $x = (x_{\mu}) = (x_{0}, \vec{x}) \in \mathbb{R}^{4}$,
is positive if and only if $x \in L_{4}$,
where
$ L_{4} = \left \{
            x \in \mathbb{R}^{4} :
            x_{0} \geq \sqrt{(x_{1})^{2} + (x_{2})^{2} + (x_{3})^{2}} \,
        \right \}$.
By $\vec{x} = (x_{1}, x_{2}, x_{3}) \in \mathbb{R}^{3}$,
we denote the 'spatial' part of the vector $x$,
and by $||\vec{x}||$ -- its Euclidean norm.

Let $S: M_{2} \rightarrow M_{2}$ be a linear map.
If $S$ is Hermitian,
then we can define a real matrix $\pi(S) \in M_{4}(\mathbb{R})$ by
\begin{equation}
\label{RandomLabel:365828}
  \sigma(\pi(S) x) = S \sigma(x), \quad x \in \mathbb{R}^{4}.
\end{equation}
Clearly, for two Hermitian maps $S_{1}, S_{2}$, we have
$\pi(S_{1} S_{2}) = \pi(S_{1}) \pi(S_{2})$
and $\pi(S^{-1}) = \pi(S)^{-1}$, whenever the map $S^{-1}$ exists.
The map $S$ is positive,
if and only if
$\pi(S)$ maps $L_{4}$ into itself.

A set $\mathcal{P} \subset \mathbb{R}^{n}$ is a
(convex) cone,
if $\alpha x + \beta y \in \mathcal{P}$,
whenever $x, y \in \mathcal{P}$,
and $\alpha, \beta \geq 0$.
An element $x \in \mathcal{P}$ is called extremal,
if from the fact that $x - y \in \mathcal{P}$,
for some vector $y \in \mathcal{P}$,
follows that $y = \alpha x$, $\alpha \geq 0$.
We will denote by Ext\,$\mathcal{P}$ the set of extremal elements of
the cone $\mathcal{P}$.
The set of operators  of $M_{n}(\mathbb{R})$ that map $\mathcal{P}$ into itself
will be denoted by $\Gamma(\mathcal{P})$.
For any subset $U$ of a linear space, by the convex hull of $U$
we mean the set
$\text{conv} U = \left \{ \, tx + (1-t)y;
\,\, x,y \in U, \, 0 \leq t \leq 1 \right \}$
containing all convex combinations of the points in $U$.
All the subsets of the set of positive maps specified above,
i.e. the set of bistochastic, k-positive, k-copositive, completely positive,
completely copositive and decomposable maps,
are convex cones themselves.

We call the cone $L_{4}$ the Lorentz cone
in $\mathbb{R}^{4}$.
The topological boundary $\partial L_{4}$ of $L_{4}$ is the set
$\partial L_{4} = \{ x \in L_{4} : x_{0} = || \vec{x} || \}$.
It it true that in this case $\partial L_{4} = \text{Ext}\,L_{4}$.
The symbol $\Theta(L_{4}) \subset \Gamma(L_{4})$
will stand for the group
of operators of $M_{4}(\mathbb{R})$
that map $\partial L_{4}$ onto itself.
By \cite[Theorem 2.4]{loewy1975positive},
for every $\tilde{A} \in \Theta(L_{4})$,
$\tilde{A} = r \tilde{O}$,
where $r > 0$ and $\tilde{O}$ belongs to
the orthochronous Lorentz group
$\mathrm{O}^{+}(1,3)$.
Let $\rho: SL_{2}(\mathbb{C}) \rightarrow \text{SO}^{+}(1,3)$
be the standard homomorphism (the spinor map)
between the group of invertible
complex matrices of $M_{2}$
with determinant equal to one
and the \emph{proper} orthochronous Lorentz group $\text{SO}^{+}(1,3)
\subset \text{O}^{+}(1,3)$
(see e.\,g. \cite{naber1992geometry} for a definition of the Lorentz group).
By definition, 
\begin{equation}
\sigma(\rho(V) x) = V^{*} \sigma(x) V, 
\end{equation}
for any $V \in SL_{2}(\mathbb{C})$ and $x \in \mathbb{R}^{4}$.
Hence, for a map $S_{V}: M_{2} \rightarrow M_{2}$,
such that $S_{V} A = V^{*} A V$, $V \in \text{SL}_{2}(\mathbb{C})$,
we have
$\rho(V) = \pi(S_{V})$.
It is evident that every element of $\text{O}^{+}(1,3)$,
which does not belong to $\text{SO}^{+}(1,3)$,
can be written as a multiple $\Lambda J$,
where $\Lambda \in \text{SO}^{+}(1,3)$ and $J$  is a diagonal matrix,
$J = \mathrm{diag}(1,1,-1,1)$.
(The particular form of the matrix $J$ has been chosen
only for the sake of convenience;
the essential fact is that 
$J \in \text{O}^{+}(1,3)$ and
$\mathrm{det} \, J = -1$.)

\section{Positive maps}
\label{sec:PositiveMaps}

We present the main result of the paper,
which is a geometrical proof of the decomposition theorem for positive
maps of $M_{2}$.

\begin{theorem}
\label{thm:PositiveMaps}
    Let $S:M_{2} \rightarrow M_{2}$ be a positive map.
    Then $S = \Lambda_{1}  + \Lambda_{2} \circ t$,
    where the maps
    $\Lambda_{1}, \Lambda_{2}:M_{2} \rightarrow M_{2}$
    are completely positive
    and $t$ stands for the transposition map of $M_{2}$.
\end{theorem}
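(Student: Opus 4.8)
The plan is to work entirely on the geometric side, with the real $4\times 4$ matrix $A=\pi(S)$, exploiting that $S\ge 0$ is equivalent to $A\in\Gamma(L_4)$. Since $\pi$ is a linear bijection from the space of Hermitian maps of $M_2$ onto $M_4(\mathbb{R})$, multiplicative under composition, it suffices to decompose $A$ into pieces that pull back through $\pi$ to completely positive maps, or to such maps composed with $t$. Two ingredients make this dictionary work. First, the conjugation $S_V\colon X\mapsto V^*XV$ satisfies $\rho(V)=\pi(S_V)$ and is completely positive for every $V$. Second, an elementary computation shows that the transposition map has $\pi(t)=J=\mathrm{diag}(1,1,-1,1)$, because $t$ fixes $\sigma_0,\sigma_1,\sigma_3$ while sending $\sigma_2\mapsto-\sigma_2$; this identity is exactly what ties the orientation-reversing part of $\mathrm{O}^+(1,3)$ to the transpose.

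First I would record that $\Gamma(L_4)$ is a closed, pointed convex cone in the finite-dimensional space $M_4(\mathbb{R})$: closedness is immediate, and pointedness follows from $L_4\cap(-L_4)=\{0\}$ together with $L_4$ spanning $\mathbb{R}^4$, so that $A,-A\in\Gamma(L_4)$ forces $A=0$. A pointed closed cone in finite dimension has a compact base, so by the Minkowski--Krein--Milman theorem every $A\in\Gamma(L_4)$ is a finite conic combination $A=\sum_i A_i$ of its extremal elements. The whole argument therefore reduces to understanding a single extremal $A_i\in\mathrm{Ext}\,\Gamma(L_4)$.

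The heart of the proof is the classification of these extremal operators, which is where the geometry of the Lorentz cone enters and which I expect to be the main obstacle. I would argue that an extremal $A_i$ is of exactly one of two types. Either it is invertible, in which case it must map $\partial L_4$ onto $\partial L_4$ --- for if $A_i(L_4)$ were a proper subcone of $L_4$ with nonempty interior one could split $A_i$ nontrivially inside $\Gamma(L_4)$, contradicting extremality --- and hence $A_i\in\Theta(L_4)$; or it has rank one, necessarily of the form $x\mapsto\langle f,x\rangle\,v$ with $v,f\in\partial L_4=\mathrm{Ext}\,L_4$, where self-duality of $L_4$ forces $f\in L_4$. In the first case, \cite[Theorem 2.4]{loewy1975positive} gives $A_i=r\tilde{O}$ with $r>0$ and $\tilde{O}\in\mathrm{O}^+(1,3)$, and I split according to $\det\tilde{O}$: if $\det\tilde{O}=+1$ then $\tilde{O}=\rho(V)$ and $A_i=\pi(rS_V)$ is the image of a completely positive map, while if $\det\tilde{O}=-1$ then $\tilde{O}=\rho(V)J$ and $A_i=\pi(rS_V)\,\pi(t)=\pi\bigl((rS_V)\circ t\bigr)$ is a completely positive map composed with transposition. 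In the rank-one case $A_i$ is the image under $\pi$ of $X\mapsto\mathrm{Tr}\bigl(\sigma(f)X\bigr)\,\sigma(v)$, which is completely positive because $\sigma(f),\sigma(v)\ge 0$; this contributes to the completely positive part.

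Finally I would reassemble the decomposition: grouping the images of completely positive maps (the $\det\tilde{O}=+1$ and the rank-one pieces) into $\Lambda_1$, and writing each $\det\tilde{O}=-1$ piece as $\pi(\,\cdot\,)\,\pi(t)$ so that the corresponding maps collect into a single completely positive $\Lambda_2$, one obtains $A=\pi(\Lambda_1)+\pi(\Lambda_2)\,J=\pi(\Lambda_1+\Lambda_2\circ t)$. Applying $\pi^{-1}$ yields $S=\Lambda_1+\Lambda_2\circ t$ with $\Lambda_1,\Lambda_2$ completely positive, each a finite sum of completely positive maps and possibly zero. The only genuinely nontrivial step is the extremal-ray classification of the previous paragraph; everything else is bookkeeping through the dictionary supplied by $\pi$, $\rho$, and the identity $\pi(t)=J$.
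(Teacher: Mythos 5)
Your framework and endgame coincide with the paper's proof: the dictionary $\pi$, the identities $\rho(V)=\pi(S_V)$ and $\pi(t)=J$, the treatment of boundary-onto operators via Theorem 2.4 of \cite{loewy1975positive} split by $\det\tilde{O}$ into completely positive and completely copositive pieces, rank-one boundary operators contributing completely positive pieces, and the final reassembly. (Your explicit formula $X\mapsto\mathrm{Tr}\bigl(\sigma(f)X\bigr)\,\sigma(v)$ for the rank-one case is in fact a slightly more self-contained substitute for the paper's appeal to Lemma 3.2 of \cite{loewy1975positive}.) The one place you depart from the paper is the only step that does not hold up. Where the paper invokes Theorem 4.5 of \cite{loewy1975positive} --- $\Gamma(L_4)=\mathrm{conv}\left(\Theta(L_4)\cup\delta(L_4)\right)$, expressly flagged there as the deep geometrical input --- you attempt to derive the equivalent classification of $\mathrm{Ext}\,\Gamma(L_4)$ in a few lines, and that derivation has a genuine gap. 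The dichotomy ``invertible or rank one'' is asserted, not proved: nothing in your argument excludes extremal elements of rank $2$ or $3$. For instance $x\mapsto(x_0,x_1,0,0)$ preserves $L_4$ and has rank $2$; it happens to equal $\tfrac12\,uu^{t}+\tfrac12\,u'u'^{t}$ with $u=(1,1,0,0)$, $u'=(1,-1,0,0)\in\partial L_4$, but that every rank-$2$ or rank-$3$ cone-preserver admits such a splitting is precisely what must be shown.

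The invertible case is also under-argued. Your parenthetical reasoning (``if $A(L_4)$ were a proper subcone with nonempty interior one could split $A$'') implicitly treats the situation where $A$ maps $L_4\setminus\{0\}$ into the interior, where a two-sided perturbation $A\pm\epsilon E$ is available; but an invertible, non-boundary-onto element of $\Gamma(L_4)$ can still touch the boundary. Take $Ax=x+(x_0-x_1)v$ with $v=(1,1,0,0)$: a direct check gives $(2x_0-x_1)^2-x_0^2-x_2^2-x_3^2\geq 2(x_0-x_1)^2\geq 0$ on $L_4$, so $A\in\Gamma(L_4)$, $\det A=1$, $A$ fixes the boundary ray through $v$, and every other boundary ray is sent strictly inside. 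Any splitting must respect the contact ray, so a local ``room to perturb'' argument produces nothing; exhibiting the decomposition (here $A=I+v\tilde{v}^{\,t}$ with $\tilde{v}=(1,-1,0,0)\in\partial L_4$, i.e. a $\Theta(L_4)$-piece plus a $\delta(L_4)$-piece) in general is exactly the substance of the Loewy--Schneider analysis. As written, then, your proposal assumes the key lemma it set out to establish. The repair is immediate: cite Theorem 4.5 of \cite{loewy1975positive} in place of your extremal-ray paragraph (which also renders the Krein--Milman reduction unnecessary), after which your argument coincides with the paper's proof.
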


\begin{proof}
\label{RandomLabel:875919}
  The map $S$ acts on the basis vectors $\sigma_{\mu}$ by
$S \sigma_{\mu} = \sum_{\nu = 0}^{3} \pi(S)_{\mu \nu} \sigma_{\nu}$.
Suppose at first that $\pi(S) \in \Theta(L_{4})$. 
If $\pi(S) = \rho(V) \in \text{SO}^{+}(1,3)$,
$V \in SL_{2}(\mathbb{C})$,
then
$S \sigma(x) = \sum_{\mu = 0}^{3}  x_{\mu} V^{*} \sigma_{\mu} V =
 V^{*} \sigma(x) V$,
 $x \in \mathbb{R}^{4}$,
and the map $S$ is completely positive,
since every matrix $A \in M_{2}$ is a linear combination
of Hermitian matrices.
On the other hand,
if $\pi(S) = \rho(V) J$,
then
\begin{multline}
\label{RandomLabel:830200}
S \sigma(x) =
  x_{0} \, V^{*}V + x_{1} \, V^{*} \sigma_{1} V -
  x_{2} \, V^{*} \sigma_{2} V +
  x_{3} \, V^{*} \sigma_{3} V = \\
  = x_{0} \, V^{*}V + x_{1} \, V^{*} \sigma_{1}^{t} V +
  x_{2} \, V^{*} \sigma_{2}^{t} V +
  x_{3} \, V^{*} \sigma_{3}^{t} V =
  V^{*} \sigma(x)^{t} V,
\end{multline}
because $\sigma_{2}^{t} = - \sigma_{2}$.
Hence, $S$ is a composition of the transposition map and the one that is
completely positive
(i.e. $S$  is completely \emph{co}positive).
Next, suppose that $\pi(S) \in \delta(L_{4})$,
where
$
 \delta(L_{4}) = \left \{ u w^{t}:
 \, u, w \in \partial L_{4}  \right \}
$,
the set of rank-one operators that preserve $\partial L_{4}$.
It follows from 
\cite{loewy1975positive}, Lemma 3.2, that
$S$ is an extremal positive map such that $\text{rank} S = 1$,
which implies that $S$ must be completely positive.

If $\pi(S)$ is any matrix that preserves the Lorentz cone,
we infer from
\cite{loewy1975positive}, Theorem 4.5,
that $\pi(S) \in \text{conv} \left ( \Theta(L_{4}) \cup \delta(L_{4}) \right)$. 
This deep geometrical result can be described by saying that every
operator that preserves the Lorentz cone $L_{4}$ is a convex combinations
of those operators that additionally preserve the boundary of the cone.
By what has been said above,
we are allowed to write
\begin{equation}
\label{RandomLabel:587827}
    S = \Lambda_{1} + \Lambda_{2} \circ t,
\end{equation}
where each $\Lambda_{1}$, $\Lambda_{2}$
is a completely positive map.
\end{proof}

\begin{remark}
Theorem \ref{thm:PositiveMaps} provides an alternative
proof of a known result
\cite{stormer1963positive, woronowicz1976positive}.
But whereas methods employed in those papers
are based mainly on algebraic manipulations of matrices,
the introduction of the map $\pi$ has allowed us
to consider geometrical properties of the relevant
cone as a subset of  $\mathbb{R}^{4}$,
in order to obtain the full characterisation of positive maps of $M_{2}$.
This characterisation could therefore be understood
as having its source in the particular symmetry of the cone
$L_{4}$
representing positive matrices of $M_{2}$.
In this simplest case,
we were able to apply the deep result from
\cite{loewy1975positive}, concerning
the description of maps that preserve the cone $L_{4}$,
in hopes of  transforming it back to the realm of linear
maps of complex matrices.
If we switch to the case of positive maps of $M_{3}$,
the cone of those vectors
$x \in \mathbb{R}^{9}$ such that
$\lambda(x) =  \sum_{\mu=0}^{8} x_{\mu} \lambda_{\mu}$
is a positive matrix of $M_{3}$,
where $(\lambda_{\mu})_{\mu=0}^{8}$ is any linear space basis of $M_{3}$,
is no longer equal to the Lorentz cone $K_{9}$.
In fact, the geometry of the cone becomes highly non-trivial,
rendering the analysis of operators that preserve it much more 
\mbox{involved \cite{goyal2011geometry}}.
Moreover, since that cone is only a proper subset of $K_{9}$,
the \mbox{Theorem 4.5} \mbox{of \cite{loewy1975positive}} can no longer be applied.
\end{remark}

A natural question that arises from the Theorem \ref{thm:PositiveMaps}
is when a positive map of $M_{2}$ is completely positive.
Checking that by means of the definition of complete positivity might be a tedious
task, hence we propose the following result.

\begin{proposition}
\label{prop:MapsPreservingIdentity}
Let $S: M_{2} \rightarrow M_{2}$ be a positive map preserving the identity.
We consider the following conditions:
\begin{enumerate}

\item
\label{lem:condProj}
$S$ is a Schwarz map
and there exists a
rank-one orthogonal projection $P$, for which
$S(P)^{2} = S(P)$.

\item
\label{lem:condCommut}
There is a rank-one orthogonal projection $P$
such that for every $X \in M_{2}$,
$S([P,X]) = [S(P), \, S(X)]$.

\item
\label{lem:condCP}
$S$ is completely positive.
\end{enumerate}
Then the following relation holds between the above conditions:
$
\ref{lem:condProj} \Rightarrow
    \ref{lem:condCommut} \Rightarrow \ref{lem:condCP}.
$
\end{proposition}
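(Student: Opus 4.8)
The plan is to prove the two implications separately, handling $\ref{lem:condProj}\Rightarrow\ref{lem:condCommut}$ by a variational argument that upgrades equality in the Kadison--Schwarz inequality to a multiplicativity statement, and $\ref{lem:condCommut}\Rightarrow\ref{lem:condCP}$ by diagonalising the resulting intertwining relation so that $S$ becomes a Hadamard (Schur) multiplier whose symbol is forced to be positive.

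For $\ref{lem:condProj}\Rightarrow\ref{lem:condCommut}$: since $P$ is an orthogonal projection and $S$ is Hermitian, $S(P)$ is Hermitian, and $S(P)^2=S(P)$ together with $P^*P=PP^*=P$ gives equality in both Kadison--Schwarz inequalities at $X=P$, i.e. $S(P^*P)=S(P)^*S(P)$ and $S(PP^*)=S(P)S(P)^*$. I would then show this forces $P$ into the multiplicative domain: $S(PY)=S(P)S(Y)$ and $S(YP)=S(Y)S(P)$ for all $Y\in M_2$. The device is to test \eqref{eq:SchwarzInequality} on $X_z=P+zY$, $z\in\mathbb{C}$. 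Writing $\Delta=S(PY)-S(P)S(Y)$ and expanding $S(X_z^*X_z)-S(X_z)^*S(X_z)\ge 0$, the constant term cancels by the equality at $P$ and one is left with $z\Delta+\bar z\,\Delta^*+|z|^2 E\ge 0$, where $E=S(Y^*Y)-S(Y)^*S(Y)\ge 0$ is the Schwarz defect at $Y$. Dividing by $|z|$ and letting $z\to 0$ along each ray $z=re^{i\theta}$ leaves $e^{i\theta}\Delta+e^{-i\theta}\Delta^*\ge 0$ for every $\theta$; applying this at $\theta$ and at $\theta+\pi$ forces this Hermitian matrix to vanish, and the choices $\theta=0,\tfrac{\pi}{2}$ give $\Delta=\Delta^*$ and $\Delta=-\Delta^*$, hence $\Delta=0$. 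The identical computation on $S(X_zX_z^*)-S(X_z)S(X_z)^*\ge 0$ yields $S(YP)=S(Y)S(P)$. Condition \ref{lem:condCommut} then follows, since $S([P,X])=S(PX)-S(XP)=S(P)S(X)-S(X)S(P)=[S(P),S(X)]$.

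For $\ref{lem:condCommut}\Rightarrow\ref{lem:condCP}$: complete positivity is unchanged under pre- and post-composition with unitary conjugations, so after conjugating the domain I may assume $P=E_{11}$, with $E_{ij}$ the matrix units. Condition \ref{lem:condCommut} then reads $S\circ\mathrm{ad}_{P}=\mathrm{ad}_{Q}\circ S$ with $Q=S(P)$, an intertwining of the two adjoint actions. Decomposing $M_2$ into eigenspaces of $\mathrm{ad}_{P}$ (eigenvalues $0,0,+1,-1$, with $E_{12},E_{21}$ spanning the $\pm 1$ spaces), the intertwining sends each eigenvector to an eigenvector of $\mathrm{ad}_{Q}$ with the same eigenvalue. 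If $S(E_{12})\ne 0$ it is a $(+1)$-eigenvector of $\mathrm{ad}_{Q}$; since $0\le Q=S(P)\le S(\mathbf 1)=\mathbf 1$, the eigenvalues of $Q$ lie in $[0,1]$, and an eigenvalue $+1$ of $\mathrm{ad}_{Q}$ is a difference of two of them, forcing them to be exactly $0$ and $1$, so $Q$ is a rank-one orthogonal projection. Conjugating the target so that $Q=E_{11}$ as well, the intertwining pins $S(E_{12})=\alpha E_{12}$ and $S(E_{21})=\bar\alpha E_{21}$ (the latter by Hermiticity), while $S(E_{11})=E_{11}$ and $S(E_{22})=\mathbf 1-E_{11}=E_{22}$. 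Hence $S$ is the Hadamard multiplier $X\mapsto C\circ X$ with $C=\bigl(\begin{smallmatrix}1&\alpha\\\bar\alpha&1\end{smallmatrix}\bigr)$. Positivity of $S$ applied to the rank-one positive matrix with all entries equal to $1$ gives $C\ge 0$, and by the classical Schur product theorem a Hadamard multiplier is completely positive precisely when its symbol is positive semidefinite; thus $S$ is completely positive.

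The degenerate possibility $S(E_{12})=0$ (equivalently $S\sigma_1=S\sigma_2=0$) must be disposed of separately: there $S(X)=x_{11}Q+x_{22}(\mathbf 1-Q)$ with $0\le Q\le\mathbf 1$, a sum of maps of the form $X\mapsto A\,\mathrm{Tr}(BX)$ with $A,B\ge 0$, each of which is manifestly completely positive, so $S$ is again completely positive. I expect the main obstacle to be the first implication: because a Schwarz map satisfies only the one-variable inequality \eqref{eq:SchwarzInequality} and not full $2$-positivity, one cannot invoke positivity of the sesquilinear form $(X,Y)\mapsto S(X^*Y)-S(X)^*S(Y)$ together with the usual multiplicative-domain machinery, and the first-order variational extraction above is exactly what replaces it. A secondary point requiring care is checking that the unitary reductions in the second implication preserve both condition \ref{lem:condCommut} and positivity, and not overlooking the degenerate branch.
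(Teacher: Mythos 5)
Your proposal is correct and takes essentially the same route as the paper: a first-order perturbation of the Kadison--Schwarz inequality \eqref{eq:SchwarzInequality} around $P$ (the paper tests $A_k = kP + iX$ and lets $k \to \infty$, you test $P + zY$ and let $z \to 0$, which yields the marginally stronger multiplicative-domain identities $S(PY) = S(P)S(Y)$ and $S(YP) = S(Y)S(P)$ rather than only the commutator relation), followed by normalizing $P = E_{11}$ with $S(P)$ pinned down, recognizing $S$ as a Hadamard multiplier with symbol $\bigl(\begin{smallmatrix} 1 & \alpha \\ \overline{\alpha} & 1 \end{smallmatrix}\bigr) \geq 0$, and invoking the same Schur-product fact the paper cites. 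The remaining differences are cosmetic: your $\mathrm{ad}_P$--$\mathrm{ad}_Q$ eigenspace phrasing replaces the paper's direct entrywise computation forcing $\lambda_1 = 1$, $\lambda_2 = 0$, and in the degenerate branch $S(E_{12}) = 0$ you exhibit $S$ as an explicit sum of completely positive maps where the paper observes that the range is abelian.
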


\begin{proof}
$\ref{lem:condProj}) \Rightarrow \ref{lem:condCommut})$.
Let $A_{k} = kP + iX$, for some $X = X^{*} \in M_2$ and
$k \in \mathbb{N}$.  Putting $A_{k}$ into the Kadison-Schwarz inequality
\eqref{eq:SchwarzInequality}, we obtain
\begin{equation}
i [S(P), \,  S(X)] - i S([P, \, X])  \leq \frac{1}{k} \left( S(X^{2}) - S(X)^{2} \right),
\end{equation}
for every $k \in \mathbb{N}$.
Hence
\begin{equation}
\label{ineq:Commutators}
i [S(P), \,  S(X)] - i S([P, \, X]) \leq 0.
\end{equation}
By taking $A'_{k} =kP - i X$, and repeating essentially the same
calculation, we get the reverse of \eqref{ineq:Commutators},
and thus
\begin{equation}
\label{eq:Commutators}
S([P, X]) = [S(P), S(X)],
\end{equation}
for any Hermitian matrix $X$.
Because \eqref{eq:Commutators} is linear in $X$,
the equation holds for every matrix $X \in M_{2}$.

$\ref{lem:condCommut}) \Rightarrow \ref{lem:condCP})$.
Let $P$ be a rank-one orthogonal projection such that
\begin{equation}
\label{eq:Commutator}
S([P,X]) = [S(P), \, S(X)],
\end{equation}
for every $X \in M_{2}$.

If $S$ is a completely positive map,
then for every unitary matrices $U, V \in M_{2}$
the map $\tilde{S}: A \mapsto U^{*} S(V^{*} A V) U$
is also completely positive.
Therefore, we can assume without loss of generality that
$P = P_{1} =
\left(
\begin{smallmatrix} 1 & 0 \\ 0 & 0 \end{smallmatrix}
\right)$,
and $S(P)$ is diagonal.
Let $P_{2} = \left(
\begin{smallmatrix} 0 & 0 \\ 0 & 1 \end{smallmatrix}
\right)$,
$E_{12} =
\left(
\begin{smallmatrix} 0 & 1 \\ 0 & 0 \end{smallmatrix}
\right)$,
and
$E_{21} =
\left(
\begin{smallmatrix} 0 & 0 \\ 1 & 0 \end{smallmatrix}
\right)$.
Suppose that
$S(P_{1}) = \lambda_{1} P_{1} + \lambda_{2} P_{2}$,
$0 \leq \lambda_{2} \leq \lambda_{1} \leq 1$.
If $S(E_{12}) = 0$,
then of course $S(E_{21}) = 0$ as well, and as a result,
$S$ maps $M_{2}$
into the Abelian algebra of diagonal matrices.
Hence, $S$ is completely positive.
On the other hand, if
$S(E_{12}) =
\left(
\begin{smallmatrix} w & z \\ x & y \end{smallmatrix}
\right) \neq 0$,
$w, z, x, y \in \mathbb{C}$,
when we plug $X = E_{12}$ into \eqref{eq:Commutator},
remembering that $\lambda_{1} \geq \lambda_{2}$,
we obtain that $x = y = w = 0$
and $\lambda_{1} - \lambda_{2} = 1$.
It means that $\lambda_{1} = 1$ and $\lambda_{2} = 0$,
and $S(E_{12}) = z E_{12}$.
Altogether, $S$ acts on a matrix
$X \in M_{2}$, $X = (x_{ij})_{i,j=1,2}$, by
\begin{equation}
S(X) = \begin{pmatrix}
 x_{11} & z \, x_{12} \\
\overline{z} \, x_{21} & x_{22}
\end{pmatrix}.
\end{equation}

Now, it is easy to see that $S$ is completely positive.
Indeed,
it comes from positivity of $S$ that $|z| \leq 1$.
We write
\begin{equation}
S(X) =
\begin{pmatrix}
1 & z \\ \overline{z} & 1
\end{pmatrix} \circ X,
\end{equation}
where the symbol $\circ$ in this case denotes the Hadamard product,
i.e. the element-wise product of two matrices of the same dimension.
Because $|z| \leq 1$, the matrix 
$
\left( 
\begin{smallmatrix} 1 & z \\ \overline{z} & 1 \end{smallmatrix}
\right)$
is positive.
It is a known fact that a map
$X \mapsto A \circ X$ is completely positive,
if and only if $A$ is a positive matrix
(see e.g. Lemma 1 of \cite{besenyei2011completely} for a simple proof of this fact).
\end{proof}

\begin{example}
One can easily show that the condition \ref{lem:condProj} of
Proposition \ref{prop:MapsPreservingIdentity} is strictly stronger than the
condition \ref{lem:condCommut}.
Let $S$ be a bistochastic map of $M_{2}$ such that
\begin{equation}
\label{RandomLabel:514431}
    \pi(S) = \begin{pmatrix}
    1 & 0 & 0 & 0 \\
    0 & b & 0 & 0 \\
    0 & 0 & 0 & 0 \\
    0 & 0 & 0 & 0
    \end{pmatrix},
    \quad 0 < b < 1,
\end{equation}
i.e. $S(\sigma_{1}) = b \sigma_{1}$ and
$S(\sigma_{2}) = S(\sigma_{3}) = 0$.  
Then $S\left( [P_{0}, \, X ] \right) = [ S(P_{0}), \, S(X) ]$,
where
$P_{0}= \frac{1}{2} \left( \begin{smallmatrix}
 1 & 1 \\ 1 & 1
 \end{smallmatrix} \right)$, and $X \in M_{2}$.
However, there is no rank-one orthogonal projection $P$
such that $S(P)^{2} = S(P)$.
Indeed,
since
$P_{0} = \frac{1}{\sqrt{2}}(\sigma_{0} + \sigma_{1})$,
$S(P_{0}) = \frac{1}{\sqrt{2}}(\sigma_{0} + b \sigma_{1})$.
Let $X$ be a Hermitian matrix, $X = \sigma(x)$, $x \in \mathbb{R}^{4}$,
$S(X) = x_{0} \sigma_{0} + b x_{1} \sigma_{1}$.
Then clearly $[S(P_{0}), \, S(X) ] =0$. 
Moreover,
$[P_{0}, \, \sigma(x)] = 
i(x_{2} \sigma_{3} - x_{3} \sigma_{2})$,
hence 
$S([P_{0}, \, \sigma(x)]) = 0$.
The condition \ref{lem:condCommut} of Prop. \ref{prop:MapsPreservingIdentity}
holds for any Hermitian $X$, therefore it holds also for any matrix $X \in M_{2}$.  
On the other hand, 
$P = \sigma(a)$, $a \in \mathbb{R}^{4}$, is a rank-one orthogonal projection,
if and only if $a_{0} = ||\vec{a}|| = \frac{1}{\sqrt{2}}$.
Then $S(P) \neq \mathbf{1}$ and since $b<1$,
$S(P)$ cannot be a rank-one orthogonal projection,
i.e. $S(P)^{2} \neq S(P)$.    
\end{example}

For extremal positive maps, the condition \ref{lem:condProj} of
Proposition \ref{prop:MapsPreservingIdentity} could be proven sufficient for
complete positivity in any dimension.
At the same time, a more general problem posed by Robertson \cite{robertson1983schwarz},
whether there is a Schwarz map that is extremal as a positive map,
but not \mbox{2-positive}, remains open.
It should be noted that from \cite[Theorem 3.3]{marciniak2008extremal}
we know that every extremal 2-positive map is completely positive.

\begin{theorem}
\label{thm:ExtremalSchwarz}
Let $S: M_{n} \rightarrow M_{n}$ be a Schwarz map that is extremal in the set
of all positive maps.
Suppose additionally, that there exist
rank-one orthogonal projections $P, Q \in M_{n}$, for which $S(P) = Q$.
Then $S$ is completely positive. 
\end{theorem}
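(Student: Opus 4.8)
The plan is to turn the Kadison--Schwarz computation already used for $\ref{lem:condProj})\Rightarrow\ref{lem:condCommut})$ in Proposition~\ref{prop:MapsPreservingIdentity} into a rigid block structure, and then to extract complete positivity from extremality, the decisive new feature being that the image $Q$ is \emph{rank one}. First I would normalise: conjugation $S\mapsto U^{*}S(V\,\cdot\,V^{*})U$ by unitaries is a linear automorphism of the cone of positive maps that preserves positivity, the Schwarz property and complete positivity, and sends extremal rays to extremal rays; so I may assume $P=Q=E_{11}$ and $S(E_{11})=E_{11}$. The case $n=2$ is then immediate, since $S(P)^{2}=Q^{2}=S(P)$ puts us in condition~\ref{lem:condProj} of Proposition~\ref{prop:MapsPreservingIdentity}, which already yields complete positivity; extremality is only needed to drive the induction on $n$.

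Since $Q=S(P)$ is a projection, $S(P^{*}P)=S(P)=Q=Q^{*}Q=S(P)^{*}S(P)$, and likewise for $P^{\perp}=\mathbf{1}-P$ (using $S(P^{\perp})=\mathbf{1}-Q=Q^{\perp}$). A polarisation of the Kadison--Schwarz inequality applied to $\lambda P+X$ and letting $|\lambda|\to\infty$ then forces $S(PX)=QS(X)$ and, by taking adjoints, $S(XP)=S(X)Q$, with the analogous identities for $P^{\perp}$; thus $P$ and $P^{\perp}$ lie in the multiplicative domain of the Schwarz map $S$. (Note this needs only the Kadison--Schwarz inequality, not $2$-positivity.) Consequently $S$ respects the grading determined by $P$: it maps the corner $PXP$, the upper block $PXP^{\perp}$, the lower block $P^{\perp}XP$ and the remainder $P^{\perp}XP^{\perp}$ into the corresponding blocks of the image. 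Writing $S_{11}$ for the restriction to $P^{\perp}M_{n}P^{\perp}$ and $\Psi$ for the induced map on the upper block, one checks that $S_{11}\colon M_{n-1}\to M_{n-1}$ is again a unital Schwarz map, and that the Kadison--Schwarz inequality on the upper block yields the crucial estimate $\Psi(r)^{*}\Psi(r)\le S_{11}(r^{*}r)$ for every row $r$.

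The finish rests on two facts about extremality. First, a completely positive map that is extremal in the cone of \emph{all} positive maps has a rank-one Choi matrix, i.e.\ equals a single conjugation $X\mapsto VXV^{*}$: if $S=\sum_k\mathrm{Ad}_{V_{k}}$, each $\mathrm{Ad}_{V_{k}}$ is positive, so extremality forces all but one Kraus term to vanish. Hence it is enough to prove $S$ is completely positive. Second, an extremal positive map coincides, up to a positive scalar, with any nonzero positive map it dominates. I would therefore try to exhibit \emph{one} nonzero completely positive $R$ with $S-R$ still positive; then $R=\lambda S$ makes $S$ completely positive (and, by the first fact, a single conjugation $\mathrm{Ad}_{U}$ with $U$ unitary). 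The candidate is $R=\mathrm{Ad}_{W}$ with $W=\sqrt{1-\varepsilon}\,P\oplus G$ for small $\varepsilon>0$ and $G\in P^{\perp}M_{n}P^{\perp}$ chosen so that $\mathrm{Ad}_{G}$ is dominated by $S_{11}$; positivity of $S-R$ on a positive input then reduces, via the Schur-complement criterion for $2\times2$ block matrices, to a compatibility between the residual corner $S_{11}(Y)-GYG^{*}$ and the residual off-diagonal block $\Psi(r)-\sqrt{1-\varepsilon}\,rG^{*}$, which is exactly what the estimate $\Psi(r)^{*}\Psi(r)\le S_{11}(r^{*}r)$ is designed to control.

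The hard part is this last construction: guaranteeing that $S$ dominates \emph{some} nonzero completely positive map. The Schwarz hypothesis is indispensable here, because the prototypical extremal positive map that is not completely positive, the transposition, fails Kadison--Schwarz and dominates no nonzero completely positive map; and the rank-one condition $S(P)=Q$ is what collapses the corner to a single scalar channel so that the off-diagonal bookkeeping can close and an admissible $G$ (aligned with a dominant Kraus direction of $S_{11}$) can be selected. A parallel route is to argue that extremality forces \emph{equality} in $\Psi(r)^{*}\Psi(r)\le S_{11}(r^{*}r)$, placing the whole upper block in the multiplicative domain; equality would make $S_{11}$ send rank-one projections to rank-one operators, and a Wigner-type rigidity (with the Schwarz property excluding the transpose alternative) would identify $S_{11}$, and then $S$, with a conjugation. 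Either way, I expect the structural steps above to be routine and the existence of the dominated completely positive piece --- equivalently, that the Schwarz property forces a nonzero completely positive part --- to be the genuine obstacle.
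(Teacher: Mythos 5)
Your structural work matches the paper's proof almost step for step --- normalisation to $P=Q=E_{11}$, the multiplicative-domain identities from Kadison--Schwarz, the resulting block form with a corner Schwarz map $S_{11}$ and an off-diagonal map $\Psi$, and the plan to exhibit a nonzero completely positive map dominated by $S$ and invoke extremality. But you then declare precisely the decisive step (``guaranteeing that $S$ dominates \emph{some} nonzero completely positive map'') to be an unresolved obstacle, and you hedge with an $\varepsilon$-perturbed ansatz $W=\sqrt{1-\varepsilon}\,P\oplus G$ with $G$ ``aligned with a dominant Kraus direction of $S_{11}$'', plus a speculative Wigner-rigidity alternative. As written, the proof is therefore incomplete: no admissible $G$ is produced, no domination is verified, and neither alternative route is carried out.

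The gap is real in your text but it closes immediately from ingredients you already have, and this is exactly how the paper finishes. Since the upper block $PXP^{\perp}$ is just the vector space $\mathbb{C}^{n-1}$ of rows, linearity alone gives $\Psi(r)=rM$ for a single matrix $M\in M_{n-1}$; so take $\varepsilon=0$ and $W=1\oplus M^{*}$, no selection of Kraus directions needed. For a rank-one projection $P_{u}$, $u=(u_{1},\vec{u})$, one computes
\begin{equation*}
\bigl(S-\mathrm{Ad}_{W}\bigr)(P_{u})
=\begin{pmatrix} 0 & 0 \\ 0 & S_{11}(\vec{u}\,\vec{u}^{\,*})-M^{*}\vec{u}\,\vec{u}^{\,*}M \end{pmatrix},
\end{equation*}
and the corner is positive by your own estimate $\Psi(r)^{*}\Psi(r)\leq S_{11}(r^{*}r)$ with $r=\vec{u}^{\,*}$ --- equivalently, by the Schur complement of $S(P_{u})\geq 0$, which is inequality \eqref{ieq:SchurForS} of the paper and uses only positivity of $S$, the Schwarz property having already done its work in fixing the block structure. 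Since every positive matrix is a positive combination of rank-one projections, $S-\mathrm{Ad}_{W}\geq 0$; $\mathrm{Ad}_{W}\neq 0$ because its $(1,1)$ corner is the identity, so extremality gives $\mathrm{Ad}_{W}=\alpha S$, evaluation at $P$ gives $\alpha=1$, and unitality $S(P^{\perp})=P^{\perp}$ forces $M^{*}M=\mathbf{1}_{n-1}$, making $W$ unitary and $S=\mathrm{Ad}_{W}$ completely positive. Incidentally, your rank-one Choi-matrix fact is then unnecessary (the unitary conjugation form falls out directly), and your $n=2$ shortcut via condition \ref{lem:condProj} of Proposition \ref{prop:MapsPreservingIdentity} is correct but superfluous.
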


\begin{proof}
Let $(e_{i})_{i=1}^{n}$ be the standard orthonormal basis in $\mathbb{C}^{n}$.  
Because we want to prove that $S$ is completely positive,
we can safely assume that $P = Q = P_{n}$,
where $P_{n} = e_{n} e_{n}^{*}$ is the orthogonal projection on the one-dimensional space
spanned by the vector $e_{n} = (0,0,\ldots,0,1) \in \mathbb{C}^{n}$. 
From the fact that $S$ preserves the identity follows that
$S(P_{n}^{\perp}) = P_{n}^{\perp}$,
where $P_{n}^{\perp} = \mathbf{1}_{n} - P_{n}$. 
For a matrix $X \in M_{n}$,
written in the block form
$X = \left( \begin{smallmatrix} A & u \\ w^{t} & z \end{smallmatrix} \right)$,
where $A \in M_{n-1}$, $u, w \in \mathbb{C}^{n-1}$ are column vectors,
and $z \in \mathbb{C}$, we have
\begin{equation}
\label{eq:SBlockForm}
    S (X) = \begin{pmatrix}
        \hat{S}_{0}(A) & S_{1} u \\
        (\overline{S}_{1} w)^{t} & z
    \end{pmatrix},
\end{equation}
where $\hat{S}_{0}: M_{n-1} \rightarrow M_{n-1}$ is necessarily a Schwarz map
of $M_{n-1}$, and $S_{1} \in M_{n-1}$.
Indeed, 
since $S$ is a Schwarz map, 
by \cite[Proposition 2.1.5]{stormer2012positive},
$S(P_{n} X) = P_{n} S(X)$.
Now, it follows that
$S(P_{n} X P_{n}) = P_{n} S(X) P_{n}$,
and similarly $S(P_{n}^{\perp} X P_{n}^{\perp}) = P_{n}^{\perp} S(X) P_{n}^{\perp}$,
$S(P_{n}^{\perp} X P_{n}) = P_{n}^{\perp} S(X) P_{n}$,
$S(P_{n} X P_{n}^{\perp}) = P_{n} S(X) P_{n}^{\perp}$,
which shows that $S$ must have the form \eqref{eq:SBlockForm}.

If we take a one-dimensional orthogonal projection 
$P_{u} = u u^{*}$, where
$u = (u_{1}, u_{2}, \ldots, u_{n}) = (\vec{u}, u_{n})$,
$\vec{u} = (u_{1}, u_{2}, \ldots, u_{n-1}) \in \mathbb{C}^{n-1}$,
$u_{n} \in \mathbb{C}$,
we have then
\begin{equation}
 SP_{u} \:=\:  S \begin{pmatrix}
    \vec{u} \vec{u}^{\,*} & \overline{u}_{n} \vec{u} \\
    u_{n} \vec{u}^{\,*}   & |u_{n}|^{2}
 \end{pmatrix} \: = \: 
 \begin{pmatrix}
    \hat{S}_{0}(\vec{u} \vec{u}^{*}) &
         \overline{u}_{n} S_{1} \vec{u} \\
    u_{n} ( S_{1} \vec{u} )^{*} &
        |u_{n}|^{2}
 \end{pmatrix}.
\end{equation}
In the case when $u_{n} \neq 0$,
taking the Schur complement
(see \cite[Theorem 1.12, p.34]{zhang2006schur}),
we have that $SP_{u} \geq 0$, if and only if
\begin{equation}
\label{ieq:SchurForS}
  S_{1} \vec{u} \vec{u}^{*} S_{1}^{*} \leq \hat{S}_{0}(\vec{u}\vec{u}^{*}).
\end{equation}
But that means that $S$ is a sum of two positive maps that act on a one-dimensional
orthogonal projection by:
\begin{equation}
 SP_{u} \:=\:  
 \begin{pmatrix}
      S_{1} \vec{u} \vec{u}^{*} S_{1}^{*}  &
         \overline{u}_{n} S_{1} \vec{u} \\
    u_{n} ( S_{1} \vec{u} )^{*} &
        |u_{n}|^{2}
 \end{pmatrix} +   
 \begin{pmatrix}
    \hat{S}_{0}(\vec{u} \vec{u}^{*})  -  S_{1} \vec{u} \vec{u}^{*} S_{1}^{*} & 0 \\
    0 & 0
 \end{pmatrix}.
\end{equation}
Since $S$ is extremal and $S(P_{n}) = P_{n}$,
\begin{equation}
 SP_{u} \:=\:  
 \begin{pmatrix}
      S_{1} \vec{u} \vec{u}^{*} S_{1}^{*}  &
         \overline{u}_{n} S_{1} \vec{u} \\
    u_{n} ( S_{1} \vec{u} )^{*} &
        |u_{n}|^{2}
 \end{pmatrix} = 
    \begin{pmatrix}
    S_{1} & 0 \\ 0 & 1
    \end{pmatrix}
    \, P_{u} \,
    \begin{pmatrix}
    S_{1} & 0 \\ 0 & 1
    \end{pmatrix}^{*}.
\end{equation}
If we put $U = \left( \begin{smallmatrix} S_{1} & 0 \\ 0 & 1 \end{smallmatrix} \right) \in M_{n}$,
then by the fact that $S(P_{n}^{\perp}) = P_{n}^{\perp}$,
we have that $U U^{*} = \mathbf{1}_{n}$, i.e. $U$ is unitary,
and $S(X) = U X U^{*}$, $X \in M_{n}$, which completes the proof.
\end{proof}

Having presented geometrical arguments for the decomposition theorem
of positive maps of $M_{2}$ in Theorem \ref{thm:PositiveMaps},
we pass now to a proposition concerning bistochastic maps.
We choose a tentative symbol $\Delta_{\infty}$ for the set of all bistochastic map of $M_{2}$.
Let also $\Delta_{1} \subset \Delta_{\infty}$ denote the set of those bistocastic maps
that are both completely positive and completely copositive at the same time.
Using the same geometric picture of positive maps as operators
preserving the Lorentz cone $L_{4}$,
we can show that the set $\Delta_{1}$ is surprisingly large.
It is worth mentioning that some related results could be found in 
\cite{stormer2013decomposition}.

Any matrix $Y \in M_{n}(\mathbb{R})$ admits the singular value decomposition:
$Y = O_{1} D O_{2}$, where
$O_{1}, O_{2}$ are orthogonal matrices
and $D = \text{diag}(t_{1}, t_{2}, \ldots, t_{n})$ is a diagonal matrix
such that the \emph{singular values} of $Y$ are ordered:
$t_{1} \geq t_{2} \geq \ldots t_{n} \geq 0$.
The singular values of $Y$ are the eigenvalues of the matrix
$|Y| = (Y^{t} Y)^{1/2}$.
In the space of operators $M_{n}(\mathbb{R})$,
we distinguish two norms.
The first one is the usual operator norm: $|| \cdot ||$;
the second one is the trace norm given by
$||Y||_{1} = \sqrt{\text{Tr} \, Y^{t} \, Y }$.
We denote the unit ball of $M_{3}(\mathbb{R})$ in the operator norm by:
$K_{\infty} = \left \{ Y \in M_{3}(\mathbb{R}): \, ||Y|| \leq 1 \right \}$;
and the ball in the trace norm by:
$K_{1} = \left \{ Y \in M_{3}(\mathbb{R}): \, ||Y||_{1} \leq 1 \right \}$.

\begin{theorem}
\label{thm:Ball}
There is a convex isomorphism $F$ between the set of bistochastic maps of $M_{2}$
and the set $K_{\infty}$, $F(\Delta_{\infty}) = K_{\infty}$. 
Moreover, $F(\Delta_{1}) = K_{1}$.    
\end{theorem}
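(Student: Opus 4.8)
The plan is to take $F$ to be the map sending a bistochastic $S$ to the lower-right $3\times3$ block of $\pi(S)$. First I would record that bistochasticity pins down the rest of $\pi(S)$: unitality $S(\mathbf{1})=\mathbf{1}$ gives $\sigma(\pi(S)e_0)=S\sigma_0=\sigma_0$, hence $\pi(S)e_0=e_0$; while trace preservation, together with $\mathrm{Tr}\,\sigma(x)=\sqrt{2}\,x_0$, forces $(\pi(S)x)_0=x_0$ for all $x$, i.e. $e_0^t\pi(S)=e_0^t$. Thus $\pi(S)=\left(\begin{smallmatrix}1&0\\0&Y\end{smallmatrix}\right)$ with $Y=F(S)\in M_3(\mathbb{R})$ acting on the spatial part $\vec{x}$. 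Since $\pi(S)$ sends $(x_0,\vec{x})$ to $(x_0,Y\vec{x})$, it preserves $L_4$ exactly when $\|Y\vec{x}\|\le\|\vec{x}\|\le x_0$ on $\partial L_4$, i.e. precisely when $\|Y\|\le 1$; testing a maximizing $\vec{x}$ with $x_0=\|\vec{x}\|$ shows the condition is also necessary. Hence $S\mapsto Y$ is a linear bijection of $\Delta_\infty$ onto $K_\infty$ (for surjectivity, any $\|Y\|\le1$ yields a genuine bistochastic $S$ via $\pi^{-1}$), and being linear it respects convex combinations, giving the convex isomorphism $F(\Delta_\infty)=K_\infty$.

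For the second claim I would reduce to diagonal $Y$ via the singular value decomposition $Y=O_1 D O_2$, $D=\mathrm{diag}(t_1,t_2,t_3)$, $t_i\ge0$. The point is that left/right multiplication of $Y$ by an element of $\mathrm{SO}(3)$ is realized by pre- or post-composing $S$ with a unitary conjugation $S_U\colon A\mapsto U^*AU$, $U\in SL_2(\mathbb{C})$ unitary (for which $\pi(S_U)=\mathrm{diag}(1,R_U)$ through the spinor map), while multiplication by the reflection $\mathrm{diag}(1,-1,1)$ is realized by composing with the transposition $t$ (recall $\pi(t)=J$). Since $\mathrm{SO}(3)$ together with one reflection generates $\mathrm{O}(3)$, such compositions can turn $Y$ into $D$. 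Crucially, both defining properties of $\Delta_1$ are stable under these operations: the conjugation $A\mapsto U^*S(W^*AW)U$ preserves complete positivity and, since $W^*X^tW=(V^*XV)^t$ with $V=\overline{W}$, also complete copositivity (then $S\circ t$ is conjugated into another completely positive map); and composing with $t$ interchanges the roles of CP and CcP, hence preserves their conjunction. Likewise the trace norm $\|Y\|_1=\sum_i t_i$ is invariant under orthogonal multiplications. Therefore $S\in\Delta_1\iff S'\in\Delta_1$ and $\|Y\|_1=\|D\|_1$, where $S'$ is the diagonalized map, and it suffices to settle the diagonal case.

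Finally I would analyze the map with $\pi(S)=\mathrm{diag}(1,t_1,t_2,t_3)$, which is the Pauli channel $S(A)=\sum_{k=0}^{3}p_k\,\tilde{\sigma}_k\,A\,\tilde{\sigma}_k$ in unnormalized Paulis, where a direct computation of the Choi matrix (diagonal in the Bell basis with entries proportional to $p_k$) shows $S$ completely positive $\iff$ all $p_k\ge0$. Solving the invertible linear relation between $(p_k)$ and $(t_i)$, complete positivity becomes the four inequalities $1+\varepsilon_1 t_1+\varepsilon_2 t_2+\varepsilon_3 t_3\ge0$ with $\varepsilon_1\varepsilon_2\varepsilon_3=+1$; complete copositivity, obtained by replacing $Y$ with $YJ$, i.e. $t_2\mapsto -t_2$, gives the complementary four inequalities with $\varepsilon_1\varepsilon_2\varepsilon_3=-1$. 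The conjunction of all eight reads $\min_{\varepsilon\in\{\pm1\}^3}(1+\sum_i\varepsilon_i t_i)\ge0$, that is $t_1+t_2+t_3\le1$, which is exactly $\|D\|_1\le1$. Hence $S\in\Delta_1\iff\|Y\|_1\le1\iff Y\in K_1$, proving $F(\Delta_1)=K_1$.

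The step I expect to be most delicate is the diagonalization: one must check that the orthogonal factors of the singular value decomposition, including the determinant $-1$ case, can genuinely be implemented by maps preserving membership in $\Delta_1$, which is why the behaviour of complete copositivity under unitary conjugation and under composition with $t$ must be verified explicitly rather than assumed. The passage from the two tetrahedra to their octahedral intersection, though the geometric heart of the statement, is then routine.
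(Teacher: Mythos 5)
Your proof is correct, and while your first half (deriving the block form of $\pi(S)$ from unitality and trace preservation, and identifying cone preservation with $\|Y\|\le 1$) coincides with the paper's argument, your treatment of $F(\Delta_1)=K_1$ takes a genuinely different route. The paper invokes two external results: the Landau--Streater theorem, by which a completely positive bistochastic map of $M_2$ is a mixture of unitary conjugations, so that $Y\in\mathrm{conv}\,\mathrm{SO}(3)$; and the Miranda--Thompson corollary characterising $\mathrm{conv}\,\mathrm{SO}(3)$ by the singular-value inequalities $t_1\le 1$ and $t_1+t_2-\mathrm{sig}(\det Y)\,t_3\le 1$, applied to both $Y$ and $Y'=Y\,\mathrm{diag}(1,-1,1)$ to obtain $t_1+t_2+t_3\le 1$. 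You instead reduce to the diagonal case by the singular value decomposition --- and you correctly identify and discharge the delicate point, namely that the orthogonal factors (including the determinant $-1$ case, via composition with $t$) are implemented by operations that preserve membership in $\Delta_1$ and leave the trace norm invariant --- and then settle the diagonal case by hand: the Choi matrix of the Pauli channel with $\pi(S)=\mathrm{diag}(1,t_1,t_2,t_3)$ is diagonal in the Bell basis, complete positivity becomes the four Fujiwara--Algoet-type inequalities $1+\varepsilon_1t_1+\varepsilon_2t_2+\varepsilon_3t_3\ge0$ with $\varepsilon_1\varepsilon_2\varepsilon_3=+1$, complete copositivity the complementary four, and the conjunction of all eight is exactly the octahedral condition $t_1+t_2+t_3\le1$. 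What your route buys is self-containedness: you need only Choi's criterion, and in effect you reprove the fragments of Landau--Streater and Miranda--Thompson that the paper consumes as black boxes; what the paper's route buys is brevity and a uniform geometric picture (everything phrased in terms of $\mathrm{conv}\,\mathrm{SO}(3)$ inside $\Gamma(L_4)$). One cosmetic remark: the paper defines $\|Y\|_1=\sqrt{\mathrm{Tr}\,Y^tY}$, which is in fact the Frobenius norm; the norm used in its final step, and the one your argument works with, is the genuine trace norm $\mathrm{Tr}\,(Y^tY)^{1/2}=t_1+t_2+t_3$, so your usage matches the intended statement rather than the misprinted definition.
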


\begin{proof}
Let $S$ be a bistochastic map.
It is true that in that case
\begin{equation}
\label{eq:Sbistochastic}
\pi(S) = \begin{pmatrix}
    1  &  \vec{0}^{t} \\
    \vec{0} & Y
    \end{pmatrix},
\end{equation}
where $Y \in M_{3}(\mathbb{R})$
and $\vec{0} \in \mathbb{R}^{3}$ is a null column vector.
We define $F(S) = Y$. 
From the definition \eqref{RandomLabel:365828} of the map $\pi$,
it is evident that for $S_{1}, S_{2} \in \Delta_{\infty}$,
we have 
$F \left( \lambda S_{1} + (1-\lambda) S_{2} \right) = 
\lambda F(S_{1}) + (1-\lambda) F(S_{2})$,
$0 \leq \lambda \leq 1$,  
and $F(S_{1} S_{2}) = F(S_{1}) F(S_{2})$.
The map $S$ is positive, if and only if
$\pi(S)$ preserves the cone $L_{4}$,
which in this case is equivalent to the fact that $F(S) = Y \in K_{\infty}$.
Moreover, it is easy to see that both $F$ and $F^{-1}$ are continuous mapping
with respect to the topologies induced by the natural norms on   
$\Delta_{\infty}$ and $K_{\infty}$.
Hence, $F(\Delta_{\infty}) = K_{\infty}$ and in addition, $F$ is a homeomorphism.

Suppose now that $S \in \Delta_{1}$.
Since $S$ is completely positive,
from \cite[Theorem 1]{landau1993birkhoff},
we know that $S$ is a convex combination of unitary maps:
$X \mapsto U^{*} X U$, $U \in \text{U}(2)$.
From the proof of \mbox{Theorem \ref{thm:PositiveMaps}},
for the particular form of $\pi(S)$ as in \eqref{eq:Sbistochastic},
it means that $Y$ belongs to the convex hull of the group $\text{SO}(3)$.
From \cite{miranda1994group}, Corollary on p. 139,
we infer that if $t_{1} \geq t_{2} \geq t_{3} \geq 0$
are singular values of the matrix $Y$,
then $Y \in \text{conv} \, \text{SO}(3)$,
if and only if
\begin{equation}
\label{eq:ineqForSingVals}
t_{1} \leq 1
\quad \text{and} \quad
t_{1} + t_{2} - \text{sig} (\text{det} Y) \, t_{3} \leq 1.
\end{equation}
$S$ is also completely copositive,
i.e. $S' = S \circ t$ is a completely positive bistochastic map.
Let $F(S') = Y'$.
Then $Y' = Y \, \text{diag} (1,-1,1)$.
Both matrices, $Y$ and $Y'$, share the same singular values and
$\text{det} \, Y' = - \text{det} \, Y$.
Because $Y' \in  \text{conv} \, \text{SO}(3)$,
we have that
$t_{1} + t_{2} + \text{sig} (\text{det} Y) \, t_{3} \leq 1$.
Combining this with \eqref{eq:ineqForSingVals},
since $\text{det} \, Y = 0$, if and only if $t_{3} = 0$,
we obtain that the map $S$ is both completely positive
and completely copositive, if and only if
$||Y||_{1} = t_{1} + t_{2} + t_{3} \leq 1$,
which altogether means that $F(\Delta_{1}) = K_{1}$. 
\end{proof}

To summarise, we have presented a geometrical approach that proves itself useful
in the investigation of properties of positive maps on $M_{2}$.
In particular, we have been able to show that the decomposition theorem for
positive maps on $M_{2}$ is a direct consequence of the fact that every operator
preserving the Lorentz cone is a convex combination of the ones that
in addition preserve the boundary of the cone.
These operators translate as completely positive and completely copositive maps,
and therefore one is able to obtain an elegant and simple proof of the decomposition
theorem that leads e.g. to the celebrated PPT criterion for separable states
of two-qubit systems \cite{peres1996separability,horodecki1996separability}.
The proof of Proposition \ref{prop:MapsPreservingIdentity},
although it does not make use of this geometrical approach explicitly,
is backed by the insight into the structure of the operators that preserve the
Lorentz cone and have norm equal to one.
Hence the assumption \ref{lem:condProj} in Proposition \ref{prop:MapsPreservingIdentity}.
Finally, Theorem \ref{thm:Ball} shows that the set of 
bistochastic maps that are completely positive and completely copositive
is unexpectedly large and topologically equivalent to the unit ball
of $M_{3}(\mathbb{R})$ with respect to the trace norm.


\bibliographystyle{unsrt}
\bibliography{./biblio}

\end{document}